\newtheorem{lemma} {Lemma}
\newtheorem{theorem}{Theorem}
\newtheorem{conjecture}{Conjecture}
\newenvironment{proof}{\noindent{\bf Proof:}\indent}%
                      {\hfill $\Box$\par}
\newcommand{\sym}[1]{{\sf #1}}
\title{On The \say{Majority is Least stable} Conjecture} 
\author{Aniruddha Biswas and Palash Sarkar \\
Indian Statistical Institute \\
203, B.T.Road, Kolkata \\
India 700108. \\
Email: \{anib\_r, palash\}@isical.ac.in
}
\date{\today}
\begin{document}

\maketitle

\begin{abstract}
We show that the ``majority is least stable'' conjecture is true for $n=1$ and $3$ and false for all odd $n\geq 5$. 
\end{abstract}
\section{Introduction}
A Boolean function $f:\{-1,1\}^n\rightarrow \{-1,1\}$ is said to be a linear threshold function (LTF) if there are real constants $w_0,w_1,\ldots,w_n$ such that for 
any $\mathbf{x}=(x_1,\ldots,x_n)\in\{-1,1\}^n$, $f(\mathbf{x})=\sym{sgn}(w_0+w_1x_1+\cdots+w_nx_n)$, where $\sym{sgn}(z) = 1$ if $z \geq 0$, and $-1$ if $z < 0$. 

For $\mathbf{x}\in\{-1,1\}^n$ and $\rho\in[0,1]$, define a distribution $N_{\rho}(\mathbf{x})$ over $\{-1,1\}^n$ in the following manner:
$\mathbf{y}=(y_1,\ldots,y_n)\sim N_{\rho}(\mathbf{x})$ if for $i=1,\ldots,n$, $y_i=x_i$ with probability $\rho$ and $y_i=\pm1$ with probability $(1-\rho)/2$ each.
The noise stability of a function $f:\{-1,1\}^n\rightarrow\mathbb{R}$, denoted by $\sym{Stab}_{\rho}(f)$, is defined as follows.
\begin{equation*}
 \sym{Stab}_{\rho}(f)=\mathop{\mathbb{E}}_{\mathbf{x}\sim\{-1,1\}^n\text{, }\mathbf{y}\sim N_{\rho}(\mathbf{x})}\left[f(\mathbf{x})f(\mathbf{y})\right].
\end{equation*}
For odd $n$, the majority function $\sym{Maj}_n:\{-1,1\}^n \rightarrow \{-1,1\}$ is the following.
\begin{equation*}
 \sym{Maj}_n(x_1,\ldots,x_n)=\sym{sgn}(x_1 + x_2 + \ldots + x_n).
\end{equation*}

Benjamini, Kalai and Schramm in 1999 (see \cite{benjamini1999noise,filmus2014real}) put forward the following conjecture.
\begin{conjecture}\label{conjecture_1}
$\left(\text{\say{Majority is Least Stable}}\right):$ Let $n$ be odd and $f : \{-1, 1\}^n \rightarrow \{- 1, 1\}$ be an LTF. Then for all $\rho\in[0,1]$, $\sym{Stab}_{\rho}(f)\geq \sym{Stab}_{\rho}(\sym{Maj}_{n})$.
\end{conjecture}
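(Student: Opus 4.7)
The plan is to settle Conjecture~\ref{conjecture_1} by proving it for $n\in\{1,3\}$ and refuting it for all odd $n\geq 5$, consistent with the abstract. The unifying tool is the Fourier--Walsh expansion
\begin{equation*}
\sym{Stab}_\rho(f)=\sum_{S\subseteq[n]}\hat{f}(S)^2\,\rho^{|S|},
\end{equation*}
which converts each stability comparison into a polynomial inequality in $\rho$. For balanced odd LTFs (those satisfying $f(-\mathbf{x})=-f(\mathbf{x})$) only odd-size Fourier coefficients contribute, so the small-$\rho$ regime is controlled by the level-$1$ Fourier weight $W_1(f):=\sum_{i=1}^{n}\hat{f}(\{i\})^2$. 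Throughout I would quotient by the symmetries $x_i\mapsto\pm x_i$, coordinate permutations, and $f\mapsto-f$, each of which preserves $\sym{Stab}_\rho$.

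For $n=1$ the claim is immediate: the LTFs are the constants (stability $1$) and $\pm x_1=\pm\sym{Maj}_1$ (stability $\rho$). For $n=3$ I would enumerate LTF representatives under the above symmetries---namely the constants, $\pm x_1$, the ``or-of-two'' functions $\sym{sgn}(x_1+x_2+1)$, and $\sym{Maj}_3$---and verify that each alternative has stability at least $\sym{Stab}_\rho(\sym{Maj}_3)=(3\rho+\rho^3)/4$ on $[0,1]$. For instance, a direct Fourier computation for the or-of-two function yields stability $(1+2\rho+\rho^2)/4$, and the difference factors as $(1-\rho)(1+\rho^2)/4\ge 0$; the remaining representatives are handled analogously.

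For the disproof at odd $n\geq 5$, my candidate is
\begin{equation*}
f_n(\mathbf{x})=\sym{sgn}(2x_1+2x_2+x_3+x_4+\cdots+x_n),
\end{equation*}
which is balanced and odd. Conditioning on $(x_1,x_2)$ decomposes $f_n$ cleanly: when $x_1=x_2$ it equals $\sym{sgn}(x_1)$ apart from the single extreme-tail input, and when $x_1\ne x_2$ it equals $\sym{Maj}_{n-2}(x_3,\ldots,x_n)$. This lets me express $W_1(f_n)$ in closed form in terms of $W_1(\sym{Maj}_{n-2})$ plus a small boundary correction. At $n=5$ the calculation yields $\hat{f_5}(\{i\})=1/2$ for $i\in\{1,2\}$ and $1/4$ for $i\in\{3,4,5\}$, so $W_1(f_5)=11/16=44/64$, whereas $W_1(\sym{Maj}_5)=5\,(3/8)^2=45/64$. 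Since both functions are odd, $\sym{Stab}_\rho(f_5)-\sym{Stab}_\rho(\sym{Maj}_5)=-\rho/64+O(\rho^3)$ is strictly negative for all sufficiently small $\rho>0$, refuting the conjecture at $n=5$.

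The main obstacle is extending this to every odd $n\geq 5$: showing $W_1(f_n)<W_1(\sym{Maj}_n)=n\binom{n-1}{(n-1)/2}^2/4^{n-1}$ uniformly reduces to a binomial-coefficient inequality whose monotonicity in $n$ is not obvious. Direct checks at $n=7$ are encouraging (the candidate gives $W_1(f_7)=455/1024<700/1024=W_1(\sym{Maj}_7)$), but a general proof will likely require either an induction routing the $(x_1,x_2)$ decomposition through $n-2\to n$, or a modified weight pattern---for example, varying the count of ``heavy'' coordinates---tailored so that the comparison becomes a transparent algebraic identity.
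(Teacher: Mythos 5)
Your overall strategy (handle $n=1,3$ by enumeration, and refute the conjecture for odd $n\geq 5$ by exhibiting a balanced LTF whose level-$1$ Fourier weight is strictly below that of $\sym{Maj}_n$, then invoking the small-$\rho$ expansion) is the same as the paper's, and your $n=1$ and $n=5$ arguments are correct ($W^{(1)}[f_5]=11/16<45/64$ is exactly the known counterexample). The genuine gap is the part that constitutes the paper's main contribution: the disproof for \emph{every} odd $n\geq 5$. You leave it open, and in fact your specific family $f_n=\sym{sgn}(2x_1+2x_2+x_3+\cdots+x_n)$ provably cannot deliver it. Computing influences exactly, $\hat{f_n}(\{1\})=\frac{2}{2^{n-1}}\binom{n-1}{(n-3)/2}$ and $\hat{f_n}(\{n\})=\frac{2}{2^{n-1}}\bigl[\binom{n-3}{(n-3)/2}+\binom{n-3}{(n-7)/2}\bigr]$, so $W^{(1)}[f_7]=695/1024$ (not $455/1024$ as you state, though still below $700/1024$), $W^{(1)}[f_9]=44016/65536<44100/65536$, but $W^{(1)}[f_{11}]=698544/1048576=W^{(1)}[\sym{Maj}_{11}]$ exactly, and $W^{(1)}[f_{13}]=11107008/16777216>11099088/16777216=W^{(1)}[\sym{Maj}_{13}]$. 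So from $n=11$ on the level-$1$ comparison fails for your candidate; the intuition that went wrong is the claim that for $x_1=x_2$ the function is $\sym{sgn}(x_1)$ ``apart from the single extreme-tail input'' --- for $n\geq 9$ there are many exceptional inputs, and their number grows fast enough to push $W^{(1)}[f_n]$ above $W^{(1)}[\sym{Maj}_n]$. The fix is precisely the modification you mention only as a possibility: let the number of heavy coordinates grow with $n$. The paper uses $g_n=\sym{sgn}(2(x_1+\cdots+x_{n-3})+x_{n-2}+x_{n-1}+x_n)$, computes $W^{(1)}[g_n]$ in closed form, and obtains the clean ratio $W^{(1)}[g_n]/W^{(1)}[\sym{Maj}_n]=\bigl[\tfrac{n-1}{n-2}\bigr]^2\tfrac{4n-9}{4n}$, which is $<1$ exactly when $(n-3)^2>0$, settling all odd $n\geq 5$ at once.

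There is also a gap at $n=3$: your list of LTF representatives is incomplete. Up to the symmetries you quotient by, the $3$-variable LTF classes also include the three-variable OR, $\sym{sgn}(x_1+x_2+x_3+2)$, and, more importantly, $\sym{sgn}(2x_1+x_2+x_3)$. The latter is the delicate case: its level-$1$ weight is $11/16$, which is \emph{smaller} than $W^{(1)}[\sym{Maj}_3]=3/4$, so it is rescued only by its constant term; its stability is $\tfrac{1}{16}+\tfrac{11}{16}\rho+\tfrac{3}{16}\rho^2+\tfrac{1}{16}\rho^3$, and the difference from $\tfrac34\rho+\tfrac14\rho^3$ factors as $\tfrac{1}{16}(1-\rho)(1+3\rho^2)\geq 0$. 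Without covering these classes (the paper does so by checking all locally monotone $3$-variable functions, which yields six possible stability polynomials), the $n=3$ claim is not established by your enumeration as written.
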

A counterexample to the conjecture for $n=5$ has been reported in~\cite{jain2017counterexample} by Vishesh Jain where it is also mentioned that there are 
other known counterexamples to this conjecture by Sivakanth Gopi (2013), and Steven Heilman and Daniel Kane (2017). We could not locate these other counterexamples.
As of 2021, the conjecture is mentioned on Page~133 of the book on Boolean functions by O'Donnell~\cite{o2014analysis}.

In this note, we show that Conjecture~\ref{conjecture_1} is true for $n=1$ and 3 and false for odd $n\geq 5$. To show that the conjecture is false for odd $n\geq 5$, we define a
sequence of Boolean functions $g_n$ and show that $\sym{Stab}_{\rho}(g_n)<\sym{Stab}_{\rho}(\sym{Maj}_{n})$. To show that the conjecture is true for $n=3$, we employed a
search over all locally monotone 3-variable Boolean functions $f$ and obtained the expressions for $\sym{Stab}_{\rho}(f)$. It turns out that each of these
expressions is greater than or equal to $\sym{Stab}_{\rho}(\sym{Maj}_{n})$ for all $\rho\in [0,1]$.

\section{Preliminaries \label{sec-prelim} }
For a positive integer $n$, let $[n] = \{1, 2,\ldots, n\}$ and $2^{[n]}$ be the power set of $[n]$. 
The Fourier transform of $f:\{-1,1\}^n\rightarrow \{-1,1\}$ is a map $\hat{f}:2^{[n]}\rightarrow [-1,1]$ defined as follows. For $S\subseteq [n]$, 
\begin{eqnarray}\label{eqn-fourier}
	\hat{f}(S) & = & \frac{1}{2^n} \sum_{\mathbf{x}=(x_1,\ldots,x_n)\in\{-1,1\}^n} f(\mathbf{x}) \prod_{i\in S}x_i.
\end{eqnarray}

For $f:\{-1,1\}^n\rightarrow\{-1,1\}$ and $k\in\{0,\ldots,n\}$, let ${W}^{(k)}[f]=\sum_{S\subseteq[n],|S|=k}\hat{f}^2(S)$
and ${W}^{\leq k}[f]=\sum_{i=0}^k {W}^{(i)}[f]$.  
We say that $f$ is balanced if $\#\{\mathbf{x}:f(\mathbf{x})=1\} = \#\{\mathbf{x}:f(\mathbf{x})=-1\}$. It follows that $f$ is balanced if and only if 
$\hat{f}(\emptyset)=0$.

The Fourier expression of $\sym{Stab}_{\rho}(f)$ is the following (see Page~56 of~\cite{o2014analysis}). 
\begin{equation}\label{noise_stability}
    \sym{Stab}_{\rho}(f) = \sum_{k=0}^{n}\rho^{k}\cdot {W}^{(k)}[f].
\end{equation}
It is easy to see that $\sym{Maj}_n$ is balanced and so ${W}^{(0)}[\sym{Maj}_n]=0$. It is known that (see Page~62 of~\cite{o2014analysis}) 
\begin{equation}\label{maj_level_1_fourier_weight}
    {W}^{(1)}[\sym{Maj}_{n}]=\left[\frac{{n-1\choose \frac{n-1}{2}}}{2^{n-1}}\right]^2\cdot n.
\end{equation}
It was observed in~\cite{jain2017counterexample} that if $f$ is a balanced linear threshold function, then showing 
${W}^{(1)}[f] < {W}^{(1)}[\sym{Maj}_n]$ would disprove Conjecture~\ref{conjecture_1}. For the sake of completeness, we state a more general form of this observation 
as a lemma and provide a proof.

\begin{lemma}\label{our_lemma_1}
Let $n$ be odd and $f:\{-1,1\}^n\rightarrow\{-1,1\}$ be a Boolean function such that ${W}^{(0)}[f]=0$ and ${W}^{(1)}[f]<{W}^{(1)}[\sym{Maj}_n]$. 
Then there exists a $\delta>0$ such that $\sym{Stab}_{\rho}(f)<\sym{Stab}_{\rho}(\sym{Maj}_{n})$ for all $0<\rho<\delta$. Consequently, the function $f$ is a 
	counter-example to Conjecture~\ref{conjecture_1}.
\end{lemma}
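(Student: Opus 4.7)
The plan is to use the Fourier expansion of noise stability from \eqref{noise_stability} to write the difference $\sym{Stab}_{\rho}(f)-\sym{Stab}_{\rho}(\sym{Maj}_n)$ as a polynomial in $\rho$ whose behaviour near $0$ is controlled by the coefficient of $\rho$, namely $W^{(1)}[f]-W^{(1)}[\sym{Maj}_n]$, which is negative by hypothesis.

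Concretely, I would first note that $W^{(0)}[\sym{Maj}_n]=0$ (since $\sym{Maj}_n$ is balanced) and $W^{(0)}[f]=0$ by assumption, so the constant terms cancel. Using \eqref{noise_stability} I can therefore write
\begin{equation*}
\sym{Stab}_{\rho}(f)-\sym{Stab}_{\rho}(\sym{Maj}_n)
= \rho\left[\bigl(W^{(1)}[f]-W^{(1)}[\sym{Maj}_n]\bigr) + \sum_{k=2}^{n}\rho^{k-1}\bigl(W^{(k)}[f]-W^{(k)}[\sym{Maj}_n]\bigr)\right].
\end{equation*}
Set $c = W^{(1)}[\sym{Maj}_n]-W^{(1)}[f]$, which is strictly positive by hypothesis.

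Next I would show that the residual sum $R(\rho) := \sum_{k=2}^{n}\rho^{k-1}\bigl(W^{(k)}[f]-W^{(k)}[\sym{Maj}_n]\bigr)$ is small as $\rho \to 0^+$. Since for any Boolean-valued $g$ we have $\sum_{k}W^{(k)}[g]=1$, each $W^{(k)}[f]$ and $W^{(k)}[\sym{Maj}_n]$ lies in $[0,1]$, hence each summand in absolute value is at most $\rho^{k-1}$. Therefore $|R(\rho)|\le \rho+\rho^2+\cdots+\rho^{n-1}\le (n-1)\rho$ for $\rho\in[0,1]$, and in particular $R(\rho)\to 0$ as $\rho\to 0^+$.

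Choosing $\delta := \min\{1,c/(n-1)\}$ (or invoking continuity), for every $\rho\in(0,\delta)$ the bracket equals $-c+R(\rho)<0$, so the whole expression is strictly negative. This yields $\sym{Stab}_{\rho}(f)<\sym{Stab}_{\rho}(\sym{Maj}_n)$ on $(0,\delta)$, proving the lemma; the consequence for Conjecture~\ref{conjecture_1} is then immediate. There is no real obstacle here: the only care needed is to separate the linear term from the higher-order terms and to bound the latter uniformly, which the trivial bound $W^{(k)}\le 1$ already accomplishes.
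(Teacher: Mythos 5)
Your proof is correct and follows essentially the same route as the paper: both cancel the constant terms, isolate the linear coefficient $W^{(1)}[f]-W^{(1)}[\sym{Maj}_n]<0$, and bound the higher-order terms by $O(\rho)$ using $W^{(k)}\in[0,1]$ to get a suitable $\delta$ (you just make $\delta$ explicit where the paper argues via a limit).
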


\begin{proof}
	For $k\geq 0$, let $a_k={W}^{(k)}[f]-{W}^{(k)}[\sym{Maj}_n]$. Since by assumption, ${W}^{(0)}[f]=0$, 
	${W}^{(1)}[f]<{W}^{(1)}[\sym{Maj}_n]$, and noting that $\sym{Maj}_n$ is balanced, it follows that $a_0=0$ and $-1\leq a_1<0$. On the
	other hand, for $k\geq 2$, we have $-1\leq a_k<1$. 

	Now, $\sym{Stab}_{\rho}(f)-\sym{Stab}_{\rho}(\sym{Maj}_n)=\sum_{k=1}^{n}\rho^{k}\cdot a_k$.
	Therefore, $\sym{Stab}_{\rho}(f)-\sym{Stab}_{\rho}(\sym{Maj}_n)<0$ if and only if $\rho(a_2+\rho a_3+\ldots+\rho^{n-2}a_n)<-a_1$. Since $a_k<1$ for $k=2,\ldots,n$,
	it follows that 
	$\rho(a_2+\rho a_3+\ldots+\rho^{n-2}a_n)$ is upper bounded by $\rho(1+\rho+\ldots+\rho^{n-2})$ whose limiting value is $0$ as $\rho\to0$. Therefore, there must 
	exist some $\delta>0$ such that for all $0<\rho<\delta$, $\rho(a_2+\rho a_3+\ldots+\rho^{n-2}a_n)<-a_1$. Consequently, 
	$\sym{Stab}_{\rho}(f)<\sym{Stab}_{\rho}(\sym{Maj}_{n})$ for all $0<\rho<\delta$.
\end{proof}

Next we introduce the notion of influence of a variable on a Boolean function. For $i\in[n]$ and 
$f: \{-1,1\}^n \rightarrow \{-1,1\}$, $\sym{Inf}_i(f)$ is defined as follows (see Page~46 of~\cite{o2014analysis}).
\begin{equation*}
    \sym{Inf}_{i}(f) = \Pr_{\mathbf{x} \in \{-1,1\}^n}[f(\mathbf{x}) \neq f(\mathbf{x^{\oplus i}})],
\end{equation*}
where $\mathbf{x}^{\oplus i}$ denotes the vector $(x_1,\ldots , x_{i-1}, -x_i , x_{i+1}, \ldots , x_n)$.

An $n$-variable Boolean function $f$ is said to be locally monotone if it is monotone increasing or decreasing in each variable.
From~\cite{gotsman1994spectral} (see Lemma~2.2 and the comment following it), it follows that if $f$ is a locally monotone function, then for all $i\in [n]$, 
$\sym{Inf}_i(f)=|\hat{f}(\{i\})|$. Since an LTF is locally monotone, we have the following result which has been used in the proof of Theorem~4.1 of~\cite{gotsman1994spectral}.
\begin{theorem} \label{gotsman_linial_theorem}\cite{gotsman1994spectral}
If $f:\{-1,1\}^n\rightarrow\{-1,1\}$ is an LTF. Then $\sum_{i=1}^n\sym{Inf}_i(f)^2 = {W}^{(1)}[f]$.
\end{theorem}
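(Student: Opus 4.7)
The plan is to derive the theorem as a direct consequence of the two facts stated in the paragraph immediately preceding it, namely that (a) every LTF is locally monotone, and (b) for locally monotone functions one has $\sym{Inf}_i(f)=|\hat{f}(\{i\})|$.

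First I would verify that any LTF is locally monotone. If $f(\mathbf{x})=\sym{sgn}(w_0+\sum_i w_i x_i)$, then fixing the coordinates $x_j$ for $j\neq i$ and varying $x_i$ from $-1$ to $+1$ changes the argument of $\sym{sgn}$ by $2w_i$; so $f$ is monotone increasing in $x_i$ if $w_i>0$, monotone decreasing if $w_i<0$, and independent of $x_i$ (hence trivially both) if $w_i=0$. Thus $f$ is locally monotone in the sense defined above.

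Next I would invoke the quoted fact from \cite{gotsman1994spectral} (Lemma~2.2 and the ensuing comment), which tells us that for each $i\in[n]$, $\sym{Inf}_i(f)=|\hat{f}(\{i\})|$. Squaring this equality gives $\sym{Inf}_i(f)^2=\hat{f}(\{i\})^2$. Summing over $i\in[n]$ and recognising the right-hand side as $W^{(1)}[f]=\sum_{|S|=1}\hat{f}(S)^2$ yields the claimed identity.

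There is no real obstacle here: the entire content of the theorem is packed into the identity $\sym{Inf}_i(f)=|\hat{f}(\{i\})|$ for locally monotone functions, and since this identity is quoted as a known result from \cite{gotsman1994spectral}, the proof reduces to the one-line observation that LTFs lie in the class to which the identity applies, followed by a square-and-sum step.
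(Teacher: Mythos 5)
Your proposal is correct and matches the paper's own justification: the paper does not give a separate proof of this theorem but derives it in the preceding paragraph exactly as you do, by noting that an LTF is locally monotone, quoting $\sym{Inf}_i(f)=|\hat{f}(\{i\})|$ from Lemma~2.2 of \cite{gotsman1994spectral}, and then squaring and summing over $i$. Your added verification that an LTF is locally monotone is a harmless (and correct) elaboration of what the paper takes for granted.
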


\section{Settling Conjecture~\ref{conjecture_1} \label{sec-conj1}}
We state and prove some results from which the main theorem follows.

\begin{lemma}\label{lemma_n_odd}
	Let $n\geq 1$, $w_0$ be an integer and $w_1$ and $w_2$ be non-zero integers. Let $T$ be a subset of $[n]$ of cardinality $t\leq n/2$.
	Consider the following LTF: $$f(x_1,\ldots,x_n)=\sym{sgn}\left(w_0+w_1\cdot\sum_{u\in T}x_{u}+w_2\cdot\sum_{v\in \overline{T}}x_{v}\right).$$ Then

\begin{eqnarray}
    	{W}^{(0)}[f] 
	& = & \frac{1}{2^{2n}} \cdot \left[\sum_{(i,j)\in \mathcal{S}_0}{t \choose i}{n-t\choose j}\right]^2, \label{eq_n_odd0} \\
	{W}^{(1)}[f] 
	& = & \frac{t}{2^{2n-2}} \cdot \left[\sum_{(i,j)\in \mathcal{S}_1}{t-1 \choose i}{n-t\choose j}\right]^2 
			+ \frac{n-t}{2^{2n-2}} \cdot \left[\sum_{(i,j)\in \mathcal{S}_2}{t \choose i}{n-t-1\choose j}\right]^2, \label{eq_n_odd1} 
\end{eqnarray}
where $\mathcal{S}_{0}$,  $\mathcal{S}_{1}$ and $\mathcal{S}_{2}$ are defined as follows.
\begin{eqnarray*}
	\begin{array}{l}
		\mathcal{S}_{0}=
		\left\{
			\begin{array}{ll}
				\{(i,j): 0\leq i \leq t \text{, } 0\leq j \leq n-t \\
				\qquad\qquad \text{ and } -w_0\leq w_1(2i-t)+w_2(2j-(n-t)) \leq w_0\} & \text{if } w_0\geq 0, \\
				\{(i,j): 0\leq i \leq t \text{, } 0\leq j \leq n-t \\
				\qquad \qquad \text{ and } w_0< w_1(2i-t)+w_2(2j-(n-t)) < -w_0\} & \text{if } w_0<0;
			\end{array}

		\right. \\
		\mathcal{S}_{1} = \{(i,j): 0\leq i \leq t-1 \text{, } 0\leq j \leq n-t \\ 
		\qquad\qquad \text{ and } -|w_1|\leq w_0+w_1(2i-(t-1))+w_2(2j-(n-t))<|w_1|\}; \\
		\mathcal{S}_{2} = \{(i,j): 0\leq i \leq t \text{, } 0\leq j \leq n-t-1 \\
		\qquad \qquad \text{ and } -|w_2|\leq w_0+w_1(2i-t)+w_2(2j-(n-t-1)) < |w_2|\}.
	\end{array}
\end{eqnarray*}
	
\end{lemma}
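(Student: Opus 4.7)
The plan is to parametrize $\mathbf{x}\in\{-1,1\}^n$ by the pair $(i,j)$, where $i=\#\{u\in T:x_u=1\}$ and $j=\#\{v\in\overline T:x_v=1\}$. Then $\sum_{u\in T}x_u=2i-t$ and $\sum_{v\in\overline T}x_v=2j-(n-t)$, so writing $S(i,j):=w_1(2i-t)+w_2(2j-(n-t))$ the LTF collapses to $f(\mathbf{x})=\sym{sgn}(w_0+S(i,j))$; the number of inputs with a given $(i,j)$ is $\binom{t}{i}\binom{n-t}{j}$. This reduces every sum over $\{-1,1\}^n$ to a double sum over the rectangle $0\le i\le t$, $0\le j\le n-t$.

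For $W^{(0)}[f]=\hat f(\emptyset)^2$ I would exploit the involution $\mathbf{x}\mapsto -\mathbf{x}$, which sends $(i,j)$ to $(t-i,n-t-j)$ and $S$ to $-S$, preserving the binomial weights. Pairing terms gives $2\sum_{\mathbf{x}}f(\mathbf{x})=\sum_{\mathbf{x}}[\sym{sgn}(w_0+S)+\sym{sgn}(w_0-S)]$. A direct case split on whether $w_0\ge 0$ or $w_0<0$, combined with the paper's convention $\sym{sgn}(0)=1$, shows that the bracket equals $2\cdot\mathbbm{1}[-w_0\le S\le w_0]$ when $w_0\ge 0$ and $-2\cdot\mathbbm{1}[w_0<S<-w_0]$ when $w_0<0$. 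In either case squaring removes the sign and yields (\ref{eq_n_odd0}) with the two-branch description of $\mathcal{S}_0$ exactly matching the strict/weak inequalities that arise from the boundary behaviour of $\sym{sgn}$.

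For $W^{(1)}[f]$ the crucial input is Theorem~\ref{gotsman_linial_theorem}: since $f$ is an LTF it is locally monotone, so $|\hat f(\{k\})|=\sym{Inf}_k(f)$ for every $k$. The function is symmetric under permutations within $T$ and within $\overline T$, hence $\sym{Inf}_k(f)$ takes one common value $\alpha_1$ for $k\in T$ and another common value $\alpha_2$ for $k\in\overline T$; this immediately gives $W^{(1)}[f]=t\alpha_1^2+(n-t)\alpha_2^2$. To compute $\alpha_1$, fix $k\in T$ and write $S=w_1x_k+S'$ with $S'=w_1(2i-(t-1))+w_2(2j-(n-t))$, where $i$ now counts $+1$'s in $T\setminus\{k\}$. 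The two values $\sym{sgn}(w_0+S'\pm w_1)$ disagree iff $-|w_1|\le w_0+S'<|w_1|$ (the same condition for both signs of $w_1$, by a quick check against $\sym{sgn}(0)=1$), so $\alpha_1=2^{-(n-1)}\sum_{(i,j)\in\mathcal{S}_1}\binom{t-1}{i}\binom{n-t}{j}$. An identical argument at any $k\in\overline T$ gives the formula for $\alpha_2$ with $\mathcal{S}_2$, and substituting produces (\ref{eq_n_odd1}).

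The main obstacle is purely bookkeeping: the boundary cases for $\sym{sgn}$ are fussy, so one must carefully verify the strict-versus-weak inequalities in the four defining branches (the two branches of $\mathcal{S}_0$ and the lower/upper endpoints of $\mathcal{S}_1,\mathcal{S}_2$) under each sign combination of $w_0,w_1,w_2$. Once that case analysis is recorded cleanly, the remaining arguments are just the parametrization, the involution symmetry, and a single appeal to Theorem~\ref{gotsman_linial_theorem}; no new ideas are needed.
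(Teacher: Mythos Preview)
Your proposal is correct and follows essentially the same approach as the paper: the same $(i,j)$-parametrization, the same involution $\mathbf{x}\mapsto-\mathbf{x}$ (the paper phrases it as a partition into $N_1,N_2,M$ but it is the identical symmetry $A(-\mathbf{x})=-A(\mathbf{x})$), and the same use of Theorem~\ref{gotsman_linial_theorem} together with the permutation symmetry within $T$ and $\overline{T}$ to reduce $W^{(1)}[f]$ to $t\alpha_1^2+(n-t)\alpha_2^2$. Your derivation of the sign-change condition $-|w_1|\le w_0+S'<|w_1|$ is also exactly what the paper does.
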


\begin{proof}
	We start with the proof of~\eqref{eq_n_odd0}.
	For $\mathbf{x}\in\{-1,1\}^n$, let $A(\mathbf{x})=w_1\cdot\sum_{u\in T}x_{u}+w_2\cdot\sum_{v\in \overline{T}}x_{v}$ so that
	$f(\mathbf{x})=\sym{sgn}(w_0+A(\mathbf{x}))$. Let $N$ (resp. $M$) be the number of $\mathbf{x}$'s such that 
	$w_0+A(\mathbf{x})\geq 0$ (resp. $w_0+A(\mathbf{x})<0$). Then $\hat{f}(\emptyset)=(N-M)/2^n$. There are two cases to consider.

	First consider the case $w_0\geq0$. Let $N_1$ (resp. $N_2$) be the number of $\mathbf{x}$'s such that 
	$A(\mathbf{x})>-w_0$ (resp. $-w_0\leq A(\mathbf{x})\leq w_0$). So, $N=N_1+N_2$. Since $A(-\mathbf{x})=-A(\mathbf{x})$, it follows that $N_1=M$ and so
	$\hat{f}(\emptyset)=N_2/2^n$. Therefore to obtain ${W}^{(0)}[f]=\hat{f}^2(\emptyset)$ it is sufficient to obtain $N_2$.
	For $\mathbf{x}\in\{-1,1\}^n$, let $i=\#\{u\in T:x_u=1\}$ and $j=\#\{v\in\overline{T}:x_v=1\}$. Then $A(\mathbf{x})=w_1(2i-t)+w_2(2j-(n-t))$. 
	For $0\leq i\leq t$ and $0\leq j\leq n-t$, the pair $(i,j)$ is in $\mathcal{S}_0$ if and only if $-w_0\leq A(\mathbf{x})\leq w_0$. So, the number of $\mathbf{x}$'s
	for which $-w_0\leq A(\mathbf{x})\leq w_0$ holds is $\sum_{(i,j)\in\mathcal{S}_0}{t\choose i}{n-t\choose j}$ which is the value of $N_2$.

	    Next consider the case $w_0<0$. Let $M_1$ (resp. $M_2$) be the number of $\mathbf{x}$'s such that 
	$A(\mathbf{x})\leq w_0$ (resp. $w_0< A(\mathbf{x})< -w_0$). So, $M=M_1+M_2$. Again since $A(-\mathbf{x})=-A(\mathbf{x})$, it follows that $M_1=N$ and so
	$\hat{f}(\emptyset)=-M_2/2^n$. Therefore to obtain ${W}^{(0)}[f]=\hat{f}^2(\emptyset)$ it is sufficient to obtain $M_2$. 
	A similar argument as above shows that $M_2$ is equal to $\sum_{(i,j)\in\mathcal{S}_{0}}{t \choose i}{n-t\choose j}$. 

	Now we turn to the proof of~\eqref{eq_n_odd1}. 
	Fix some $s\in T$ and some $r\in \overline{T}$. Due to symmetry, for any $i\in T$, we have $\sym{Inf}_i(f)=\sym{Inf}_s(f)$ and for any $j\in \overline{T}$, 
	we have $\sym{Inf}_j(f)=\sym{Inf}_r(f)$ and so from Theorem~\ref{gotsman_linial_theorem}, 
\begin{equation}\label{local_eq}
	{W}^{(1)}[f]=t\cdot \sym{Inf}_{s}(f)^2 + (n-t)\cdot \sym{Inf}_{r}(f)^2.
\end{equation}
	Let $N_s$ (resp. $N_r$) be the number of $\mathbf{x}\in \{-1,1\}$ such that $f(\mathbf{x})\neq f(\mathbf{x}^{\oplus s})$ (resp.
	$f(\mathbf{x})\neq f(\mathbf{x}^{\oplus r})$). Then $\sym{Inf}_s(f)=N_s/2^{n-1}$ and $\sym{Inf}_r(f)=N_r/2^{n-1}$. 
	
	For $\mathbf{x}\in\{-1,1\}^n$, let $B(\mathbf{x})=w_0+ w_1 \sum_{u\in T\setminus\{s\}}x_u+ w_2\sum_{v\in \overline{T}}x_v$. From the definition of $f$, 
	$N_s$ is the number of $\mathbf{x}$'s such that either ($w_1x_s+B(\mathbf{x}) \geq 0$ and $-w_1x_s+B(\mathbf{x})<0$) or ($w_1x_s+B(\mathbf{x}) < 0$ and 
	$-w_1x_s+B(\mathbf{x})\geq0$) holds. The two conditions are equivalent to 
	$-w_1x_s\leq B(\mathbf{x}) < w_1x_s$ and $w_1x_s\leq B(\mathbf{x}) < -w_1x_s$ respectively. For the first condition, we must have $w_1x_s>0$ as
	otherwise we obtain $|w_1x_s|\leq B(\mathbf{x})<-|w_1x_s|$ which is a contradiction since $w_1x_s$ is non-zero; similarly, for the 
	second condition, we must have $w_1x_s<0$. Noting that $x_s\in\{-1,1\}$, both the conditions 
	boil down to $-|w_1|\leq B(\mathbf{x})<|w_1|$, and consequently, $N_s$ is the number of $\mathbf{x}$'s such that $-|w_1|\leq B(\mathbf{x}) <|w_1|$ holds.

	For $\mathbf{x}\in \{-1,1\}^n$, let $i=\#\{u\in T\setminus\{s\}:x_u=1\}$ and $j=\#\{v\in \overline{T}:x_v=1\}$. Then $B(\mathbf{x})=w_0+w_1(2i-(t-1))+w_2(2j-(n-t))$. 
	For $0\leq i\leq t-1$ and $0\leq j\leq n-t$, the pair $(i,j)$ is in $\mathcal{S}_1$ if and only if $-|w_1|\leq B(\mathbf{x})<|w_1|$ holds. So, the number of $\mathbf{x}$'s 
	for which $-|w_1|\leq B(\mathbf{x}) <|w_1|$ holds is $\sum_{(i,j)\in \mathcal{S}_1} {t-1\choose i}{n-t\choose j}$ which is the value of $N_s$. 
	
	A similar argument shows that $N_r$ is equal to $\sum_{(i,j)\in\mathcal{S}_{2}}{t \choose i}{n-t-1\choose j}$. 
	Using the values of $N_s$ and $N_r$ to obtain $\sym{Inf}_s(f)$ and $\sym{Inf}_r(f)$ respectively and substituting these in~\eqref{local_eq} gives
	the expression for ${W}^{(1)}[f]$ stated in~\eqref{eq_n_odd1}.
\end{proof}

For odd $n\geq 3$, we define a sequence of functions $g_n:\{-1,1\}^n\rightarrow\{-1,1\}$ where
\begin{eqnarray}\label{eqn-g}
	g_{n}(x_1,\ldots,x_n) & = & \sym{sgn}(2\cdot(x_1+\ldots+x_{n-3})+x_{n-2}+x_{n-1}+x_n).
\end{eqnarray}
In~\cite{jain2017counterexample}, the function $g_5$ has been shown to be a counter-example to Conjecture~\ref{conjecture_1}.
\begin{lemma}\label{our_lemma_1a}
	For $g_n$ defined in~\eqref{eqn-g}, we have
	\begin{eqnarray}
	{W}^{(0)}[g_n] & = & 0, \nonumber \\
	{W}^{(1)}[g_{n}] 
		& = & (n-3)\cdot \left[\frac{{n-4\choose \frac{n-5}{2}}\cdot 8}{2^{n-1}}\right]^2 + 3\cdot \left[\frac{{n-3\choose \frac{n-3}{2}}\cdot 2}{2^{n-1}}\right]^2.
				\label{eqn-W1g}
	\end{eqnarray}
\end{lemma}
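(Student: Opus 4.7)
The plan is to apply Lemma~\ref{lemma_n_odd} to $g_n$. The function $g_n$ has exactly the form required by the lemma with $w_0 = 0$ and two weight classes: the weight-$2$ class has size $n-3$ and the weight-$1$ class has size $3$. To satisfy the hypothesis $t\le n/2$, I would take $T$ to be whichever of the two classes is smaller, i.e.\ $T=\{1,\ldots,n-3\}$ with $w_1=2$ when $n=5$, and $T=\{n-2,n-1,n\}$ with $w_1=1$ when $n\ge 7$. Since \eqref{eq_n_odd1} is a symmetric sum over the two weight classes (and \eqref{eq_n_odd0} is invariant under swapping $T$ and $\overline T$), the final formula is the same regardless of the choice.

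For $W^{(0)}[g_n]=0$ I would use a parity argument on the interior sum $S(\mathbf{x})=2(x_1+\cdots+x_{n-3})+x_{n-2}+x_{n-1}+x_n$: the first term is even and the sum of three $\pm 1$'s is odd, so $S(\mathbf{x})$ is odd and in particular never zero. Hence $g_n(-\mathbf{x})=-g_n(\mathbf{x})$ everywhere, making $g_n$ balanced. Equivalently, $\mathcal{S}_0=\emptyset$ because the equation $w_1(2i-t)+w_2(2j-(n-t))=0$ has no integer solution for $\{w_1,w_2\}=\{1,2\}$ with $t\in\{3,n-3\}$ and $n$ odd.

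For $W^{(1)}[g_n]$, by symmetry within each weight class all weight-$2$ variables have the same influence and so do all weight-$1$ variables, so Theorem~\ref{gotsman_linial_theorem} (equivalently, \eqref{eq_n_odd1}) gives
\[
W^{(1)}[g_n]=(n-3)\cdot\sym{Inf}_1(g_n)^2+3\cdot\sym{Inf}_n(g_n)^2.
\]
I would compute the two influences directly by counting the number of $\mathbf{x}$ for which flipping the coordinate changes $\sym{sgn}(S)$. For $x_n$ the relevant sum $S-x_n$ changes by $\pm 2$, so a sign change requires $S-x_n\in\{-1,0,1\}$; parity forces $S-x_n$ to be even, collapsing this to $S-x_n=0$. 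Conditioning on $x_{n-2}+x_{n-1}$ (which must itself be $0$, else parity of $2(x_1+\cdots+x_{n-3})+x_{n-2}+x_{n-1}$ fails), the probability evaluates to $\binom{n-3}{(n-3)/2}/2^{n-2}$. For $x_1$ the sum $S-2x_1$ changes by $\pm 4$, so a sign change requires $S-2x_1\in\{-1,1\}$ (after parity); conditioning on $B=x_{n-2}+x_{n-1}+x_n\in\{\pm 1,\pm 3\}$ and $A=\sum_{i=2}^{n-3}x_i$, the four cases on $B$ all collapse to $A\in\{-1,1\}$, giving total probability $\binom{n-4}{(n-5)/2}/2^{n-4}$. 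Squaring and substituting produces \eqref{eqn-W1g}.

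The only real obstacle is the bookkeeping in the weight-$2$ influence calculation, where four cases of $B$ must be combined; everything else is a mechanical parity reduction followed by counting balanced binary strings, which is why the answer comes out in terms of central binomial coefficients of $n-3$ and $n-4$.
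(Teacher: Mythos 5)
Your proposal is correct and follows essentially the same route as the paper: a parity argument shows the argument of $\sym{sgn}$ is never zero (giving ${W}^{(0)}[g_n]=0$), and symmetry within the two weight classes together with Theorem~\ref{gotsman_linial_theorem} reduces ${W}^{(1)}[g_n]$ to two influences, which your conditioning on $A$ and $B$ counts exactly as the paper does via the sets $\mathcal{S}_1,\mathcal{S}_2$ of Lemma~\ref{lemma_n_odd}. Your care about choosing $T$ so that $t\le n/2$ holds when $n=5$ is a detail the paper glosses over, but since the formulas are symmetric in $T$ and $\overline{T}$ it does not change the outcome.
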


\begin{proof} 
	We use Lemma~\ref{lemma_n_odd}. For $g_n$, we have $w_0=0$, $w_1=1$ and $w_2=2$. Also, $T=\{n-2,n-1,n\}$ so that $t=3$. With these values, the sets $\mathcal{S}_0$, 
	$\mathcal{S}_1$ and $\mathcal{S}_2$ defined in Lemma~\ref{lemma_n_odd} are the following.
	\begin{eqnarray*}
		\mathcal{S}_{0} & = & \{(i,j): 0\leq i \leq 3 \text{, } 0\leq j \leq n-3 \text{ and } (2i-3)+2(2j-(n-3))=0\}, \\
		\mathcal{S}_{1} & = & \{(i,j): 0\leq i \leq 2 \text{, } 0\leq j \leq n-3 \text{ and } -1\leq (2i-2)+2(2j-(n-3))<1\}, \\
		\mathcal{S}_{2} & = & \{(i,j): 0\leq i \leq 3 \text{, } 0\leq j \leq n-4 \text{ and } -2\leq (2i-3)+2(2j-(n-4)) < 2\}.
	\end{eqnarray*}
	Since $(2i-3)+2(2j-(n-3))$ is odd, it cannot be zero and so $\mathcal{S}_0$ is empty showing that ${W}^{(0)}[g_n]=0$.

	Since $(2i-2)+2(2j-(n-3))$ is even it cannot be equal to $-1$ and so the only possible value it can take is 0. From this, we obtain 
	$\mathcal{S}_1$ to be $\{(1,(n-3)/2)\}$. 
	
	Similarly, since $(2i-3)+2(2j-(n-4))$ is odd, the only possible values in the set $\{-2,-1,0,1\}$ that it can take are $-1$ and $1$. Corresponding to these 
	two values, we obtain $j=(n-3-i)/2$ and $j=(n-2-i)/2$ respectively. Since $n$ is odd, in the first case $i$ must be even, while in the second case $i$ must
	be odd. So, $\mathcal{S}_2=\{(0,(n-3)/2),(2,(n-5)/2),(1,(n-3)/2),(3,(n-5)/2)\}$.

	Substituting the values of $w_0$, $w_1$, $w_2$, $t$ as well as $\mathcal{S}_1$ and $\mathcal{S}_2$ in~\eqref{eq_n_odd1}, we obtain 

	\begin{eqnarray*}
{W}^{(1)}[g_n]
		&=& \frac{3}{2^{2n-2}}\cdot \left[{2 \choose 1}{n-3\choose \frac{n-3}{2}}\right]^2 
		+\frac{n-3}{2^{2n-2}}\cdot \left[{n-4\choose \frac{n-3}{2}}+3 {n-4\choose \frac{n-5}{2}}+3{n-4\choose \frac{n-3}{2}} +{n-4\choose \frac{n-5}{2}}\right]^2 \\
	\end{eqnarray*}
	Noting that $(n-3)/2+(n-5)/2=n-4$ leads to the expression for ${W}^{(1)}[g_n]$ given in~\eqref{eqn-W1g}.
\end{proof}

\begin{lemma}\label{our_lemma_2}
	Let $g_n$ be defined as in~\eqref{eqn-g}. For odd $n\geq 5$, ${W}^{(1)}[g_{n}]< {W}^{(1)}[\sym{Maj}_n]$.
\end{lemma}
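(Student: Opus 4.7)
The plan is to reduce both $W^{(1)}[\sym{Maj}_n]$ and $W^{(1)}[g_n]$ to a common ``base'' binomial coefficient and then compare the resulting polynomial coefficients. Writing $n=2m+1$ with $m\geq 2$, the key normalization is the central binomial $\binom{2m-2}{m-1}$.

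First, I would simplify the expression for $W^{(1)}[g_n]$ from~\eqref{eqn-W1g} using the identity $\binom{2m-2}{m-1}=2\binom{2m-3}{m-2}$ (which follows from Pascal's rule and the symmetry $\binom{2m-3}{m-1}=\binom{2m-3}{m-2}$). After substituting, the two terms in~\eqref{eqn-W1g} merge into
\begin{equation*}
W^{(1)}[g_n] \;=\; \frac{32m-20}{2^{4m}}\,\binom{2m-2}{m-1}^{\!2}.
\end{equation*}
Next, for the majority side, using the standard identity $\binom{2m}{m}=\frac{2(2m-1)}{m}\binom{2m-2}{m-1}$ in~\eqref{maj_level_1_fourier_weight} gives
\begin{equation*}
W^{(1)}[\sym{Maj}_n] \;=\; \frac{4(2m+1)(2m-1)^2}{m^2\cdot 2^{4m}}\,\binom{2m-2}{m-1}^{\!2}.
\end{equation*}

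With both quantities expressed in the same form, the inequality $W^{(1)}[g_n]<W^{(1)}[\sym{Maj}_n]$ reduces to the purely algebraic claim $(8m-5)m^2<(2m+1)(2m-1)^2$. Expanding both sides is a one-line calculation, and the difference telescopes to $(m-1)^2$, giving
\begin{equation*}
W^{(1)}[\sym{Maj}_n]-W^{(1)}[g_n] \;=\; \frac{4(m-1)^2}{m^2\cdot 2^{4m}}\,\binom{2m-2}{m-1}^{\!2}.
\end{equation*}
Since $n\geq 5$ means $m\geq 2$, this difference is strictly positive, which yields the lemma.

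There is no genuine obstacle; the main delicate step is the first reduction, where one must correctly carry the factor $8$ inside the first term of~\eqref{eqn-W1g} through the squaring and then compensate using $\binom{2m-3}{m-2}=\tfrac{1}{2}\binom{2m-2}{m-1}$ so that the two summands can be combined. Once both weights sit in front of the same $\binom{2m-2}{m-1}^2/2^{4m}$, everything reduces to a cubic identity in $m$ whose difference factors as a perfect square, making both the strict positivity for $m\geq 2$ and the boundary equality at $m=1$ (i.e., $n=3$, where $g_3=\sym{Maj}_3$) transparent.
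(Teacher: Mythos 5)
Your proof is correct: the simplifications of both weights to multiples of $\binom{2m-2}{m-1}^2/2^{4m}$ check out, and the difference of the cubics is indeed $(m-1)^2$, positive exactly for $m\geq 2$, i.e.\ $n\geq 5$. This is essentially the paper's argument in the variable $m=(n-1)/2$ and phrased as a difference rather than a ratio: the paper simplifies $W^{(1)}[g_n]/W^{(1)}[\sym{Maj}_n]$ to $\bigl[\tfrac{n-1}{n-2}\bigr]^2\tfrac{4n-9}{4n}$ and notes it is less than $1$ precisely when $(n-3)^2>0$, which is the same perfect-square phenomenon you found.
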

\begin{proof}
	The expression for ${W}^{(1)}[\sym{Maj}_{n}]$ is given by~\eqref{maj_level_1_fourier_weight} and the expression for
	${W}^{(1)}[g_{n}]$ is given by~\eqref{eqn-W1g}. Therefore
\begin{flalign}\label{compare_g1_maj}
\quad\quad\frac{{W}^{(1)}[g_{n}]}{{W}^{(1)}[\sym{Maj}_{n}]}&= \left[\frac{{n-4 \choose \frac{n-5}{2}}\cdot 8}{{n-1 \choose \frac{n-1}{2}}}\right]^2\left(\frac{n-3}{n}\right) + \left[\frac{{n-3 \choose \frac{n-3}{2}}\cdot 2}{{n-1 \choose \frac{n-1}{2}}}\right]^2\left(\frac{3}{n}\right) 
	=\left[\frac{n-1}{n-2}\right]^2\left(\frac{4n-9}{4n}\right).
\end{flalign}
	From~\eqref{compare_g1_maj}, it follows that ${W}^{(1)}[g_{n}]< {W}^{(1)}[\sym{Maj}_n]$ if and only if $(n-3)^2>0$ i.e. $n\geq5$.
\end{proof}
Note that from~\eqref{compare_g1_maj}, for $n=3$ we have ${W}^{(1)}[g_{3}]={W}^{(1)}[\sym{Maj}_3]$.

\begin{lemma}\label{our_lemma_3} 
	Conjecture~\ref{conjecture_1} is true for $n=1$ and $n=3$.
\end{lemma}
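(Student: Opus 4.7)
The plan is to verify the conjecture in each case by direct enumeration: $n=1$ is immediate by listing the one-variable LTFs, and $n=3$ reduces, as the introduction indicates, to a small case analysis over the locally monotone $3$-variable Boolean functions.

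For $n=1$, the non-degenerate LTFs are $\pm 1$ and $\pm x_1$. Since $\sym{Maj}_1(x_1) = x_1$, one has $\sym{Stab}_\rho(\sym{Maj}_1) = \rho$; the constants have stability $1$ and the dictators have stability $\rho$, both of which are $\geq \rho$ on $[0,1]$. Hence the conjecture holds for $n=1$.

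For $n=3$, I would first obtain a closed form for $\sym{Stab}_\rho(\sym{Maj}_3)$. Since $\sym{Maj}_3$ is balanced, ${W}^{(0)}[\sym{Maj}_3] = 0$; by~\eqref{maj_level_1_fourier_weight} with $n=3$, ${W}^{(1)}[\sym{Maj}_3] = 3/4$; the unique third-level Fourier coefficient is readily computed to be $-1/2$, giving ${W}^{(3)}[\sym{Maj}_3] = 1/4$, and Parseval forces ${W}^{(2)}[\sym{Maj}_3] = 0$. Hence $\sym{Stab}_\rho(\sym{Maj}_3) = \tfrac{3}{4}\rho + \tfrac{1}{4}\rho^3$. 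Next, noise stability is invariant under (i) permuting input coordinates, (ii) flipping any single input, and (iii) negating the output, so any locally monotone function may be assumed monotone with $f(\mathbf{1}) = +1$; up to coordinate permutation, the resulting list of three-variable functions is short (the constant $+1$, a dictator, the two- and three-variable ANDs and ORs viewed as three-variable functions, $(x_1 \land x_2) \lor x_3$, $x_1 \land (x_2 \lor x_3)$, and $\sym{Maj}_3$ itself). For each representative $f$ I would compute ${W}^{(k)}[f]$ for $k = 0, 1, 2, 3$ either by direct summation over the eight inputs, or, where convenient, by invoking Lemma~\ref{lemma_n_odd}.

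Finally, I would verify $\sym{Stab}_\rho(f) \geq \sym{Stab}_\rho(\sym{Maj}_3)$ on $[0,1]$ for each representative. A useful structural remark: since $\sym{Stab}_1(f) = 1 = \sym{Stab}_1(\sym{Maj}_3)$ by Parseval, the difference $D_f(\rho) := \sym{Stab}_\rho(f) - \sym{Stab}_\rho(\sym{Maj}_3)$ vanishes at $\rho = 1$, so $D_f(\rho) = (1-\rho)\, P_f(\rho)$ for some polynomial $P_f$ of degree at most $2$; moreover $D_f(0) = {W}^{(0)}[f] \geq 0$ gives $P_f(0) \geq 0$. In each case I expect $P_f$ to admit a manifestly non-negative factorization on $[0,1]$; for instance, taking $f = x_1 \lor x_2$ viewed as a three-variable function, one finds $D_f(\rho) = \tfrac{1}{4}(1-\rho)(1+\rho^2)$. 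The principal obstacle is clerical rather than conceptual: enumerating the cases, computing the Fourier weights, and exhibiting a non-negativity certificate for each difference polynomial.
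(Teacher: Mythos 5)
Your proposal is correct and takes essentially the same route as the paper: an exhaustive case check over the (locally) monotone $3$-variable functions, computing each Fourier weight profile and verifying $\sym{Stab}_\rho(f)\geq \tfrac{3}{4}\rho+\tfrac{1}{4}\rho^3$ on $[0,1]$. Your symmetry reduction to monotone representatives and the $(1-\rho)$-factorization certificates are just a tidier organization of the same enumeration, and your $n=1$ discussion is in fact slightly more careful than the paper's (which ignores the constant and anti-dictator LTFs, harmlessly since their stability is $\geq\rho$).
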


\begin{proof}
	For $n=1$, the only LTF is the majority function and so Conjecture~\ref{conjecture_1} is trivially true. 

	Using~\eqref{noise_stability}, for $n=3$, it is easy to check that $\sym{Stab}_{\rho}(\sym{Maj}_{3})=0.75\rho+ 0.25\rho^3$.
	We need to compare this expression with $\sym{Stab}_{\rho}(f)$ where $f$ is an LTF. 
	We used an exhaustive search. There is no easy way to determine whether a given function is an LTF. Instead we considered the set of all 3-variable locally monotone 
	functions. Since an LTF is locally monotone, our search covered all LTFs.
	Let $f$ be a $3$-variable locally monotone function. We obtained the values of ${W}^{(k)}[f]$, for $k=0,1,2,3$ and using~\eqref{noise_stability} obtained
	the expression for $\sym{Stab}_{\rho}(f)$. From the search, the possible expressions for $\sym{Stab}_{\rho}(f)$ were obtained to be the following: 
$1$, $\rho$, $0.75\rho+ 0.25\rho^3$, 
$0.0625+0.6875\rho+0.1875\rho^2+0.0625\rho^3$, 
$0.25+0.5\rho+0.25\rho^2$, 
$0.5625+0.1875\rho+0.1875\rho^2+0.0625\rho^3$. 
	For each of these expressions, it is easy to verify that $\sym{Stab}_{\rho}(f)\geq\sym{Stab}_{\rho}(\sym{Maj}_{3})$ for all $\rho\in [0,1]$.
\end{proof}

Based on Lemmas \ref{our_lemma_1},~\ref{our_lemma_2} and~\ref{our_lemma_3}, we obtain the main result of the paper, of which the case $n=5$ was 
reported in~\cite{jain2017counterexample}.
\begin{theorem}\label{main_theorem}
Conjecture~\ref{conjecture_1} is true for $n=1$ and $n=3$. For odd $n\geq 5$, Conjecture~\ref{conjecture_1} is false.
\end{theorem}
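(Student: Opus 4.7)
The plan is to assemble the result directly from the four lemmas already in place, splitting into the two complementary claims.

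First I would dispatch the positive side, namely that Conjecture~\ref{conjecture_1} holds for $n=1$ and $n=3$. This is handled in its entirety by Lemma~\ref{our_lemma_3}: the $n=1$ case is trivial because the only LTF on one variable is $\sym{Maj}_1$, and the $n=3$ case reduces to comparing the polynomial $0.75\rho+0.25\rho^3$ with each of the (finitely many) closed-form stability polynomials arising from the exhaustive enumeration of $3$-variable locally monotone functions. So I would simply invoke Lemma~\ref{our_lemma_3} verbatim.

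For the negative side, when $n\geq 5$ is odd, my plan is to exhibit the explicit counterexample $g_n$ defined in~\eqref{eqn-g} and run it through Lemma~\ref{our_lemma_1}. Concretely, Lemma~\ref{our_lemma_1a} gives ${W}^{(0)}[g_n]=0$ (so $g_n$ is balanced), and Lemma~\ref{our_lemma_2} gives the strict inequality ${W}^{(1)}[g_n]<{W}^{(1)}[\sym{Maj}_n]$ for every odd $n\geq 5$. These are precisely the two hypotheses of Lemma~\ref{our_lemma_1}, so I would conclude that there exists $\delta>0$ such that $\sym{Stab}_\rho(g_n)<\sym{Stab}_\rho(\sym{Maj}_n)$ for all $0<\rho<\delta$, which falsifies Conjecture~\ref{conjecture_1} for that $n$.

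There is essentially no obstacle left at this stage since the arithmetic (the binomial identity collapsing~\eqref{compare_g1_maj} to $[(n-1)/(n-2)]^2\cdot(4n-9)/(4n)$) and the low-order Fourier comparison have both been packaged into the preceding lemmas. The only minor care point I would flag is the boundary case $n=3$, where~\eqref{compare_g1_maj} yields equality rather than strict inequality, so $g_3$ cannot serve as a counterexample and the $n=3$ statement must genuinely be proved by the enumeration in Lemma~\ref{our_lemma_3} rather than by the $g_n$ construction; beyond that the proof is a one-line assembly for each regime.
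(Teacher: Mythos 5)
Your proposal is correct and is essentially the paper's own argument: the paper obtains the theorem by combining Lemma~\ref{our_lemma_3} for $n=1,3$ with Lemma~\ref{our_lemma_1} applied to $g_n$, whose hypotheses are supplied by Lemmas~\ref{our_lemma_1a} and~\ref{our_lemma_2} for odd $n\geq 5$. Your added remark that $g_3$ fails as a counterexample (equality in~\eqref{compare_g1_maj}) matches the paper's own note after Lemma~\ref{our_lemma_2}.
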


For $n\geq 5$, there are other functions which provide counterexamples to Conjecture~\ref{conjecture_1}.
For odd $n$, suppose $h_n$ is defined as $h_n(\mathbf{x})=\sym{sgn}(2\cdot(x_1+\ldots+x_{n-3})-(x_{n-2}+x_{n-1}+x_n))$. Then proceeding as in the proof of Lemma~\ref{our_lemma_1a},
it is possible to show that ${W}^{(0)}[h_n] =0$ and ${W}^{(1)}[h_{n}]={W}^{(1)}[g_{n}]$. So, for odd $n\geq 5$, the function $h_n$ is a counterexample to
Conjecture~\ref{conjecture_1}. 

Futher, for concrete values of $n$, it is possible to obtain examples of functions $f_n$ such that $W^{\leq 1}[f_n] < W^{\leq 1}[g_n]$. Suppose $t$ is odd and
$n=t^2$. For positive integer $w$, define the function $f_n^{(w)}$ in the following manner.
\begin{eqnarray*}
	f_n^{(w)}(x_1,\ldots,x_n)
	& = & \left\{
		\begin{array}{ll}
			\sym{sgn}((2w+t-1)+2w(x_1+\cdots+x_{t})+ \\
			\qquad \cdots+(2w+t-1)(x_{n-t+1}+\cdots+x_n)) & \text{if $(t+1)/2$ is even}, \\
			\sym{sgn}((2w+t)+(2w+1)(x_1+\cdots+x_{t})+\\
			\qquad \cdots+(2w+t)(x_{n-t+1}+\cdots+x_n)) & \text{if $(t+1)/2$ is odd.}
		\end{array}
		\right.
\end{eqnarray*}
Computations show that
$W^{\leq 1}[f_9^{(4)}] = 0.651764 < 0.659180 = W^{\leq 1}[g_9]$ and $W^{\leq 1}[f_{25}^{(12)}] = 0.640686 < 0.643535 = W^{\leq 1}[g_{25}]$.

%
%

\section{Limiting Value of ${W}^{\leq 1}[g_n]$}
It is known (see Page~62 of~\cite{o2014analysis}) that ${W}^{(\leq 1)}[\sym{Maj}_n]$ is a decreasing sequence which is lower bounded by $2/\pi$.
It has been conjectured (see~\cite{benjamini1999noise} and Page~115 of~\cite{o2014analysis}) that if $f$ is an LTF, then ${W}^{\leq 1}[f] \geq \frac{2}{\pi}$.

We have shown that for odd $n\geq 5$, the function $g_n$ defined by~\eqref{eqn-g} satisfies ${W}^{(1)}[g_n]<{W}^{(1)}[\sym{Maj}_n]$. This brings
up the question of whether the sequence $g_n$ also provides a counter-example to the $2/\pi$ lower bound conjecture for LTFs. In this section, we show that this is not the
case.

From Lemma~\ref{our_lemma_1a}, ${W}^{(0)}[g_n]=0$ and so, ${W}^{\leq 1}[g_n]={W}^{(1)}[g_n]$. This shows that it is sufficient to 
consider ${W}^{(1)}[g_n]$.
We show that ${W}^{(1)}[g_n]$ is a decreasing sequence which is lower bounded by $2/\pi$. The expression for ${W}^{(1)}[g_n]$ given 
by~\eqref{eqn-W1g} involves binomial coefficients. We use the following bounds on factorial function (see Page~54 of~\cite{feller1968introduction}).
\begin{eqnarray}\label{eqn-str}
	\sqrt{2\pi m}\cdot \frac{m^m}{e^m}\exp\left(\frac{1}{12m+1}\right) \leq m! \leq \sqrt{2\pi m}\cdot\frac{m^m}{e^m}\exp\left(\frac{1}{12m}\right). 
\end{eqnarray}
Let $p=\frac{k}{m}$ and $q=1-p$. Using~\eqref{eqn-str}, the following bounds on ${m \choose k}$ can be obtained. 
\begin{eqnarray} \label{eqn-bin}
	\left. \begin{array}{rcl}
	{m \choose k} & \geq & \frac{1}{\sqrt{2\pi mpq}}(p^pq^q)^{-m}\exp\left(\frac{1}{12m+1}-\frac{1}{12k}-\frac{1}{12(m-k)}\right), \\ 
	{m \choose k} & \leq & \frac{1}{\sqrt{2\pi mpq}}(p^pq^q)^{-m}\exp\left(\frac{1}{12m}-\frac{1}{12k+1}-\frac{1}{12(m-k)+1}\right). 
	\end{array} \right\}
\end{eqnarray}

\begin{lemma}\label{our_lemma_4}
	For $g_n$ defined in~\eqref{eqn-g}, ${W}^{(1)}[g_n]$ is a decreasing sequence and $\lim_{n\to\infty}{W}^{(1)}[g_n]=\frac{2}{\pi}$.
	Consequently, for all odd $n$, ${W}^{(1)}[g_n]\geq 2/\pi$.
\end{lemma}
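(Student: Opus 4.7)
The plan is to first simplify the two-term expression for $W^{(1)}[g_n]$ given in~\eqref{eqn-W1g} so that it involves only a single central binomial coefficient. Pascal's rule gives $\binom{n-3}{(n-3)/2} = \binom{n-4}{(n-5)/2} + \binom{n-4}{(n-3)/2}$, and since $n-4-(n-3)/2 = (n-5)/2$, the symmetry of binomial coefficients yields $\binom{n-4}{(n-3)/2} = \binom{n-4}{(n-5)/2}$. Therefore $\binom{n-4}{(n-5)/2} = \tfrac{1}{2}\binom{n-3}{(n-3)/2}$, and substituting this into~\eqref{eqn-W1g} collapses the two summands into
\[
W^{(1)}[g_n] \;=\; \frac{4n-9}{2^{2n-4}}\,\binom{n-3}{(n-3)/2}^{2},
\]
which is much easier to analyse.

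For the monotonicity claim, I would compute the ratio $W^{(1)}[g_{n+2}]/W^{(1)}[g_n]$ using the simplified formula. A direct factorial computation gives $\binom{n-1}{(n-1)/2} = \tfrac{4(n-2)}{n-1}\binom{n-3}{(n-3)/2}$, and substituting this in reduces the ratio to $(4n-1)(n-2)^{2}/[(4n-9)(n-1)^{2}]$. Hence $W^{(1)}[g_{n+2}] < W^{(1)}[g_n]$ is equivalent to $(4n-9)(n-1)^{2} - (4n-1)(n-2)^{2} > 0$. A routine expansion shows that this difference equals $2n-5$, which is strictly positive for every odd $n\geq 3$. Consequently the sequence $\{W^{(1)}[g_n]\}_{n\text{ odd}}$ is strictly decreasing.

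For the limit, I would apply the bounds in~\eqref{eqn-bin} with $m=n-3$ and $k=(n-3)/2$, so that $p=q=\tfrac{1}{2}$, $(p^{p}q^{q})^{-m}=2^{n-3}$, and $\sqrt{2\pi mpq}=\sqrt{\pi(n-3)/2}$. Both the upper and lower bounds then pin down $\binom{n-3}{(n-3)/2}$ as $\frac{2^{n-3}}{\sqrt{\pi(n-3)/2}}\cdot e^{\epsilon_{n}}$, where the error exponent $\epsilon_{n}$ is a sum of $O(1/n)$ terms from the Stirling correction in~\eqref{eqn-str} and so tends to $0$. Substituting into the simplified formula yields $W^{(1)}[g_n] = \frac{4n-9}{2\pi(n-3)}\cdot e^{2\epsilon_{n}}$, whose limit as $n\to\infty$ is $4/(2\pi) = 2/\pi$. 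Together with the monotonicity, this forces $W^{(1)}[g_n]\geq 2/\pi$ for every odd $n$, which is the final assertion of the lemma.

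The main obstacle, if any, is arranging the algebra so that the two-binomial sum in~\eqref{eqn-W1g} really collapses into the clean one-binomial form above; once that identity is in hand, both the decreasing-ratio computation and the asymptotic extraction via~\eqref{eqn-bin} are short and mechanical, and no additional analytic input is needed.
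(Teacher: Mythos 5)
Your proof is correct, and it takes a somewhat different route from the paper's. Your key move is to collapse the two-term expression~\eqref{eqn-W1g} into a single central binomial coefficient via Pascal's rule and symmetry, $\binom{n-4}{(n-5)/2}=\tfrac{1}{2}\binom{n-3}{(n-3)/2}$, giving the exact closed form $W^{(1)}[g_n]=\frac{4n-9}{2^{2n-4}}\binom{n-3}{(n-3)/2}^2$ (which checks out, e.g.\ it gives $0.6875$ at $n=5$ and $3/4$ at $n=3$). From there you prove monotonicity by a direct ratio test, using $\binom{n-1}{(n-1)/2}=\frac{4(n-2)}{n-1}\binom{n-3}{(n-3)/2}$ and the expansion $(4n-9)(n-1)^2-(4n-1)(n-2)^2=2n-5>0$, and you get the limit by one application of the bounds~\eqref{eqn-bin} at $p=q=\tfrac12$, since $\frac{4n-9}{2\pi(n-3)}\to\frac{2}{\pi}$ and the Stirling correction exponents are $O(1/n)$. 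The paper instead keeps the two-term form: it proves monotonicity by comparing the ratios $a_n/b_n$ and $a_{n+2}/b_n$ against ${W}^{(1)}[\sym{Maj}_n]$, reusing the already-computed expression~\eqref{compare_g1_maj}, and it obtains the limit by splitting $a_n=A_n+B_n$ and showing $A_n\to 2/\pi$ and $B_n\to 0$ separately with~\eqref{eqn-bin}. Your version buys a cleaner and more self-contained argument with an exact closed form and an explicit positive difference $2n-5$; the paper's version buys economy by leveraging~\eqref{compare_g1_maj} and the known majority weight~\eqref{maj_level_1_fourier_weight}, and its decomposition makes visible which of the two terms carries the $2/\pi$ mass. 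Both arguments conclude identically: a decreasing sequence with limit $2/\pi$ is bounded below by $2/\pi$.
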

\begin{proof}
	Let $a_n={W}^{(1)}[g_{n}]$ and $b_n={W}^{(1)}[\sym{Maj}_n]$. We wish to show that $a_n$ is a decreasing sequence. 
	To do this, we compare $a_{n+2}/b_n$ to $a_n/b_n$. The expression for $a_n/b_n$ is given by~\eqref{compare_g1_maj}. Using~\eqref{maj_level_1_fourier_weight} 
	and~\eqref{eqn-W1g}, we obtain $a_{n+2}/b_n=(4n-1)/(4n)$. We have $a_n \geq a_{n+2}$ if and only if $a_n/b_n\geq a_{n+2}/b_n$. 
	Using the expressions for $a_n/b_n$ and $a_{n+2}/b_n$, the last condition is equivalent to 
$$	
\left[\frac{n-1}{n-2}\right]^2\left(\frac{4n-9}{4n}\right) > \frac{4n-1}{4n}
$$
which holds if and only if $n\geq 3$. So, $a_n$ is a decreasing sequence for all odd $n\geq 3$.

Let $A_n=(n-3)\cdot\left[\frac{{n-4\choose \frac{n-5}{2}}\cdot 8 }{2^{n-1}}\right]^2$ and $B_n=3\cdot\left[\frac{{n-3\choose \frac{n-3}{2}}\cdot 2}{2^{n-1}}\right]^2$ 
and so $a_n=A_n+B_n$. We show that $A_n$ tends to $2/\pi$ and $B_n$ tends to $0$ and so $a_n$ tends to $2/\pi$ as $n$ goes to infinity.
	
	First consider $A_n$. Letting $m=n-4$, $k=\frac{n-5}{2}$, $p=k/m$ and $q=1-p$, from~\eqref{eqn-bin} and using some routine simplifications
	we obtain the following bounds on $A_n$. 
\begin{eqnarray*}
	A_n & \geq & \frac{2}{\pi}\left[\frac{n-4}{n-3}\right]^{(n-3)}\left[\frac{n-5}{n-4}\right]^{-(n-4)}\exp\left(\frac{2}{12n-47}-\frac{2}{6n-30}-\frac{2}{6n-18}\right), \\
	A_n & \leq & \frac{2}{\pi}\left[\frac{n-4}{n-3}\right]^{(n-3)}\left[\frac{n-5}{n-4}\right]^{-(n-4)}\exp\left(\frac{2}{12n-48}-\frac{2}{6n-29}-\frac{2}{6n-17}\right).
\end{eqnarray*}
Since $\lim_{x\to\infty}(1+\frac{1}{x})^x=e$ and $\lim_{x\to\infty}(1-\frac{1}{x})^x=\frac{1}{e}$, it follows that $\lim_{n\to\infty}A_n=2/\pi$.



Now, consider $B_n$. Letting $m=n-3$, $k=\frac{n-3}{2}$ and $p=q=\frac{1}{2}$, from~\eqref{eqn-bin} and using some routine simplifications
        we obtain the following bounds on $B_n$.
\begin{eqnarray*}
	B_n & \geq & \frac{3}{2\pi}\left(\frac{1}{n-3}\right)\exp\left(\frac{2}{12n-35}-\frac{2}{6n-18}-\frac{2}{6n-18}\right), \\
	B_n & \leq & \frac{3}{2\pi}\left(\frac{1}{n-3}\right)\exp\left(\frac{2}{12n-36}-\frac{2}{6n-17}-\frac{2}{6n-17}\right).
\end{eqnarray*}
It follows that $\lim_{n\to\infty}B_n=0$.

\end{proof}

\bibliographystyle{ltf}
\bibliography{ltf.bib}
\end{document}